\documentclass{ifacconf}
\usepackage{graphicx}      
\usepackage{natbib}        
\usepackage{cuted}
\usepackage{siunitx}
\usepackage{tabularx}

\usepackage{float}
\usepackage{color}
\usepackage{amssymb}
\makeatletter
\let\theoremstyle\@undefined
\makeatother
\usepackage{amsthm}
\usepackage{amsmath}
\usepackage{bm} 

\newcommand{\DM}[1]{\textcolor{red}{#1}}

\newcommand{\Ha}{\mathcal{H}}

\renewcommand{\d}{\mathrm{d}}

\newcommand{\Dom}{\Omega}
\newcommand{\Domb}{\Gamma}
\newcommand{\inner}[2]{\left\langle #1 \right\rangle_{#2}}

\newcommand{\grad}{\mathbf{grad}} 
\newcommand{\tr}{\operatorname{tr}}

\newcommand{\D}{\mathbb{G}\text{rad}~} 
\DeclareMathOperator*{\Div}{Div}
\renewcommand{\div}{\operatorname{div}}

\newtheorem{theorem}{Theorem}[section]

\newtheorem{definition}[theorem]{Definition}
\newtheorem{proposition}[theorem]{Proposition}
\newtheorem{lemma}{Lemma}

\newcommand{\R}{\mathbb R}

\newcommand{\LR}{L^2(\Dom; \R)}

\newcommand{\Rn}{\R^n}
\newcommand{\LRn}{L^2(\Dom; \Rn)}

\newcommand{\vx}{\bm{x}}
\newcommand{\vy}{\bm{y}}
\newcommand{\ve}{\bm{e}}
\newcommand{\vf}{\bm{f}}
\newcommand{\vg}{\bm{g}}
\newcommand{\vn}{\bm{n}}

\newcommand{\vr}{\bm{r}}

\newcommand{\vG}{\bm{G}}

\newcommand{\Tens}{\R^{n\times n}_{\text{sym}}}
\newcommand{\LTens}{L^2(\Dom; \Tens)}

\newcommand{\TG}{\mathbb G}
\newcommand{\TLa}{\Lambda}

\newcommand{\TX}{\mathbb X}
\newcommand{\TY}{\mathbb Y}
\newcommand{\TE}{\mathbb E}
\newcommand{\TF}{\mathbb F}

\begin{document}
\begin{frontmatter}

\title{Conduction-Diffusion in $N$-Dimensional settings as irreversible port-Hamiltonian systems \thanksref{footnoteinfo}} 

\thanks[footnoteinfo]{This work has been achieved in the frame of the EIPHI Graduate school (contract ``ANR-17-EURE-0002") and ANID funded projects CIA250006 and FONDECYT 1231896.}
\author[First]{Luis Mora} 
\author[Second]{Yann Le Gorrec} 
\author[Third]{Hector Ramirez}
\author[Fourth]{Denis Matignon}

\address[First]{Dept. of Applied Mathematics, University of Waterloo, Waterloo, Canada}
\address[Second]{Université Marie et Louis Pasteur, SUPMICROTECH, CNRS, institut FEMTO-ST, F-25000 Besançon, France}
\address[Third]{Departamento de Electronica, Universidad Tecnica Federico Santa Maria, Valparaiso, 2390123, Chile}
\address[Fourth]{Fédération ENAC ISAE-SUPAERO ONERA, Université de Toulouse, Toulouse, 31055, France}

\begin{abstract}                
This work extends previous $1D$ irreversible port-Hamiltonian system (IPHS) formulations to boundary-controlled $ND$ distributed parameter systems describing conduction–diffusion fluid phenomena. Within a unified and thermodynamically consistent framework, we show that conduction and diffusion can be represented through a single coherent structure that preserves global energy balance and ensures a correct characterization of entropy production. The resulting formulation provides a foundation for the systematic modeling and control of complex multi-physical processes governed by coupled transport mechanisms in $N$-dimensions. In the longer term, this framework opens the door to structure-preserving numerical schemes capable of enforcing thermodynamic principles directly at the discretized level.
\end{abstract}

\begin{keyword}
Irreversible port-Hamiltonian systems, Thermodynamics, Diffusion, Convection, N-Dimensional fluids 
\end{keyword}

\end{frontmatter}
\allowdisplaybreaks

\section{Introduction}

Thermodynamic modeling of fluids plays a central role in predicting and analyzing physical and chemical behavior under variations of pressure, temperature, and composition \citep{Bird_2006}. Such modeling relies on energy-based constitutive laws, equations of state, and transport relations that characterize phase behavior, density, internal energy, entropy production, and reactive transformations \citep{de_Groot_1962}. Accurate thermodynamic representations are essential for describing a wide range of engineering processes \citep{Kjelstrup_2017}, including combustion, reactive mixing, mineral dissolution, aerothermal flows, and multi-phase transport, where both the first and second laws of thermodynamics constrain the admissible system dynamics. In recent years, the need for thermodynamically consistent models has intensified, driven by modern applications where energy fluxes, irreversible phenomena, and multi-physics couplings dominate system behavior \citep{Dubljevic_2022}. 
A promising approach to address these challenges builds on the Port-Hamiltonian system (PHS) framework \citep{Geoplex09}, originally developed as an extension of classical Hamiltonian mechanics to open systems interacting with their environment. PHS formulations encode the interconnections between energy storage and power exchange through a skew symmetric geometric structure and power conjugated port variables. These properties make PHS particularly suitable for the modular modeling of complex multi domain systems, where mechanical, thermal, and chemical phenomena are tightly coupled. Generalizations to distributed parameter settings \citep{Gorrec_2005,Rashad_Califano_2020} have introduced boundary port variables derived from the co-states, enabling the representation of open fluid flows, heat transfer mechanisms, reactive mass fluxes, and fluid-structure interactions. 
Within fluid mechanics, PHS and pseudo PHS formulations have been proposed for various models, including inviscid flows, shallow water dynamics, and incompressible Navier-Stokes systems. More recently, energy based descriptions have been extended to compressible and reactive fluids, providing coherent representations of the energy fluxes that govern thermodynamic behavior \citep{Califano_2022,MoraPoF2021}. These extensions are particularly relevant because modern computational fluid dynamics methods often require artificial corrections to approximate physical energy fluxes or struggle with nonphysical instabilities when thermodynamic consistency is not explicitly enforced. A formulation that inherently incorporates energy flux, such as the PHS framework, offers a systematic alternative that naturally aligns modeling, simulation, and control design \citep{CardosoJCF2025}.
A further development in this context is the Irreversible Port-Hamiltonian Systems (IPHS) framework, originally introduced in \citep{Ramirez_CES_2013,Ramirez_EJC_2013} and extended to 1D dimensional non-convective distributed parameter systems in \citep{RamirezCES2022,Ramirez2022Entropy}. IPHS generalizes traditional PHS to explicitly incorporate irreversible thermodynamics. Unlike classical PHS, IPHS formulations use the total energy as generating function while embedding the entropy balance directly within the state evolution, thereby guaranteeing consistency with both the first and second laws of thermodynamics. The resulting structure, a modulated skew symmetric operator, captures not only system topology but also the irreversible thermodynamic forces induced by viscosity, heat conduction, diffusion, and chemical reactions. These features make IPHS particularly appealing for passivity based and energy shaping control strategies \citep{Ortega_IEEE_CSM_2001}, including IDA-PBC approaches \citep{Ramirez_Automatica_2016}, which directly exploit the energetic structure for feedback design.
Motivated by the examples in \citep{RamirezLeGorrecTFMST2016}, and building upon the generalization of IPHS to $ND$ systems provided in \citep{MoraWC2023}, this work extends the 1D IPHS formulations of \citep{RamirezCES2022} to boundary controlled $ND$ distributed parameter systems for conduction-diffusion fluid phenomena within a unified, thermodynamically consistent framework. This extension integrates conduction, diffusion and reactive transport into a single coherent formalism that preserves energy balance and entropy production. 
 The paper is organized as follows: Irreversible port-Hamiltonian systems in 1D are recalled  \S~\ref{s-1DIpHs}. Then, the multidimensional heat equation is recast as an $ND$ IPHS in \S~\ref{s-Cond}. The complete process with the diffusion is then presented in  \S~\ref{s-CDP}, first with one species only, then with $n$ species. 

\section{1D irreversible port-Hamiltonian systems}
\label{s-1DIpHs}
Irreversible port-Hamiltonian formulations have been extended to distributed parameter systems defined on a one-dimensional spatial domain\footnote{Excluding systems with convection.} $z \in [a,b],\,a,\,b\in\mathbb{R},\,a<b$ in \citep{RamirezCES2022}. The state variables of the system are the $n+1$ \emph{extensive variables}\footnote{A variable is qualified as extensive when it characterizes the thermodynamic state of the system and its total value is given by the sum of its constituting parts.} of Thermodynamics, $\mathbf{x} \in\mathbb{R}^{n+1}$ : the first $n$ variables  $x=[q_1,\ldots,q_n]^\top \in\mathbb{R}^n$ are the $n$ extensive variable of thermodynamics associated to the involved physical domain excluding the thermal one, and the remaining component represents the entropy density $s \in \mathbb{R}$. 
The energy function is in this case defined by
\begin{equation}\label{eq:Totalenergy}
H(x,s)=\int_a^b h\left(x(z),s(z)\right) dz
\end{equation} 
where $h(x,s)$ is the energy density function. The total entropy functional is denoted by
\begin{equation}\label{eq:Totalentropy}
S(t)=\int_a^b s(z,t) dz
\end{equation}
We shall furthermore use the following notation. For any two functionals $H_1$ and $H_2$ of the type (\ref{eq:Totalenergy}) and for any matrix differential operator ${\mathcal G}$ we define the pseudo-brackets 
\begin{equation}\label{pseudo-bracket}
\begin{split}
\left\{H_1|{\mathcal G}|H_2 \right\} & =
\begin{bmatrix}
\frac{\delta H_1}{\delta x} \\ 
\frac{\delta H_1}{\delta s}
\end{bmatrix}^\top
\begin{bmatrix}
0 & {\mathcal G}\\
-{\mathcal G}^* & 0 \end{bmatrix}
\begin{bmatrix}
\frac{\delta H_2}{\delta x} \\ 
\frac{\delta H_2}{\delta s}
\end{bmatrix}, \\ 
\left\{H_1|H_2\right\} & = \frac{\delta H_1}{\delta s} \,\grad\left(\frac{\delta H_2}{\delta s}\right)
\end{split}
\end{equation} 
where ${\mathcal G}^*$ denotes the formal adjoint operator of ${\mathcal G}$.

\begin{definition}
\label{definition_IPHS}
An infinite-dimensional IPHS undergoing $m$ irreversible processes is defined by 
\begin{itemize}
\item a pair of functionals: the total energy (\ref{eq:Totalenergy}) and the total entropy (\ref{eq:Totalentropy})
\item a pair of matrices $P_0=-{P}^\top_0 \in \mathbb{R}^{n\times n}$ and $P_1={P}^\top_1 \in \mathbb{R}^{n\times n}$
\item a pair of matrices  $G_0 \in \mathbb{R}^{n\times m}$, $G_1 \in \mathbb{R}^{n\times m}$ with $m\leq n$ and the strictly positive real-valued functions $\gamma_{k,i}\left(x,s,\tfrac{\delta H}{\delta x},\tfrac{\delta H}{\delta s}\right) \: k=0,\,1;\: i\in \left\{ 1,\,...\,m \right\}$
\item a pair of real-valued functions $\gamma_{s}\left(x,s,\tfrac{\delta H}{\delta x},\tfrac{\delta H}{\delta s}\right)>0$ 
and $g_s(x)$
\end{itemize}
and the PDE
\begin{multline}\label{IPHS-BCS}
\frac{\partial }{\partial t} 
\begin{bmatrix} 
x(t,z)\\ 
s(t,z)
\end{bmatrix}=
\begin{bmatrix}
P_0 & G_0\mathbf{R_0}\\ 
-\mathbf{R_0}^\top G_0^\top &0
\end{bmatrix}
\begin{bmatrix}
\frac{\delta H}{\delta x}(t,z)\\ 
\frac{\delta H}{\delta s}(t,z) 
\end{bmatrix}+
  \\ \begin{bmatrix}P_1 \frac{\partial (.)}{\partial z}  & &\frac{\partial \left(G_1 \mathbf{R_1} .\right) }{\partial z}\\ \mathbf{R_1}^\top G_1^\top  \frac{\partial\left(.\right) }{\partial z}&& g_s \mathbf{r_s} \frac{\partial \left(.\right)}{\partial z}+\frac{\partial \left(g_s \mathbf{r_s}.\right)}{\partial z}\end{bmatrix}\begin{bmatrix}\frac{\delta H}{\delta x}(t,z)\\ \frac{\delta H}{\delta s}(t,z) \end{bmatrix}
\end{multline}
with vector-valued functions $\mathbf{R_l}\left(x,s,\tfrac{\delta H}{\delta x},\tfrac{\delta H}{\delta s}\right) \in \mathbb{R}^{m\times 1}$, $l=0,1$, defined by
\begin{equation*}
R_{0,i}=\gamma_{0,i} \left\{S|G_0(:,i)|H\right\}
\end{equation*}
\begin{equation*}
R_{1,i}=\gamma_{1,i} \left\{S|G_1(:,i)\tfrac{\partial}{\partial z}|H\right\}
\end{equation*}
and  
\begin{equation*}
r_{s}=\gamma_s\left\{S|H\right\}
\end{equation*}
where the notation $G(:,i)$ indicates the $i$-th column of the matrix $G$.
\end{definition}

\begin{definition}\label{definition_BC_IPHS}
A Boundary Controlled IPHS (BC-IPHS) is an infinite-dimensional IPHS according to Definition \ref{definition_IPHS}, augmented with the boundary port variables
\begin{align}\label{eq:IPHS_input}
v(t)  & = W_{B}
\begin{bmatrix}
e(t,b) \\
e(t,a)
\end{bmatrix},&  &y(t)  = W_{C}
\begin{bmatrix}
e(t,b) \\
e(t,a)
\end{bmatrix} 
\end{align}
as linear functions of the modified effort variable
\begin{equation}\label{boundary_variables_IPHS}
e(t,z)  = 
\begin{bmatrix}
\frac{\delta H}{\delta x}\\
\mathbf{R} \:\frac{\delta H}{\delta s}  \\
\end{bmatrix},
\end{equation}
with $\mathbf{R} =\begin{bmatrix}1 & \mathbf{R_1} & \mathbf{r_s} \end{bmatrix}^\top$ and
\begin{align*}
W_{B} &= 
\begin{bmatrix}
\frac{1}{\sqrt{2}}\left(\Xi_2 + \Xi_1P_{ep} \right)M_p & \frac{1}{\sqrt{2}} \left(\Xi_2 - \Xi_1 P_{ep} \right)M_p
\end{bmatrix},
\\
W_{C}& = 
\begin{bmatrix} 
\frac{1}{\sqrt{2}}\left(\Xi_1+\Xi_2 P_{ep}\right)M_p  & \frac{1}{\sqrt{2}}\left(\Xi_1-\Xi_2 P_{ep}\right)M_p 
\end{bmatrix},
\end{align*}   
where $M_p=\left( M^\top M\right)^{-1}M^\top$, $P_{ep}=M^\top P_e M$ and $M\in \mathbb{R}^{(n+m+2) \times k} $ is spanning the columns of $P_e \in \mathbb{R}^{n+m+2}$ of rank $k$, defined by\footnote{$0$ has to be understood as the zero matrix of appropriate dimensions.}
\begin{equation}\label{def:Pe}
P_e=\begin{bmatrix}
 P_1 & 0 &G_1 & 0\\
0 & 0 &0 & g_s\\
G_1^\top & 0 &0 & 0\\
0 & g_s  & 0  & 0\\
\end{bmatrix}
\end{equation}
and where $\Xi_1$ and $\Xi_2$ in $\mathbb{R}^{k \times k} $satisfy $\Xi_2^\top\Xi_1+\Xi_1^\top\Xi_2=0$ and $\Xi_2^\top\Xi_2+\Xi_1^\top \Xi_1=I$.\rule{0.5em}{0.5em}
\end{definition}

With this formulation at hand, the first principle of Thermodynamics can be recovered in:
\begin{lemma} [First law of Thermodynamics] 
The total energy balance is 
\begin{equation*}
\dot{H} = y(t)^\top v(t)^,,
\end{equation*}
which leads, when the input is set to zero, to $\dot{H} = 0$ in accordance with the first law of Thermodynamics.
\end{lemma}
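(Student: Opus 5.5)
We compute $\dot H$ directly along solutions of (\ref{IPHS-BCS}) and show that it reduces to a boundary term, which the port variables of Definition~\ref{definition_BC_IPHS} rewrite as $y^\top v$. By the chain rule,
\[
\dot H=\int_a^b \Big(\tfrac{\delta H}{\delta x}^\top \partial_t x+\tfrac{\delta H}{\delta s}\,\partial_t s\Big)\,dz .
\]
We substitute the right-hand side of (\ref{IPHS-BCS}) and treat the zero-order part and the first-order part separately.

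\emph{Zero-order part.} Skew-symmetry of $P_0$ kills $\tfrac{\delta H}{\delta x}^\top P_0\tfrac{\delta H}{\delta x}$. The two cross terms $\tfrac{\delta H}{\delta x}^\top G_0\mathbf{R_0}\tfrac{\delta H}{\delta s}$ and $-\tfrac{\delta H}{\delta s}\mathbf{R_0}^\top G_0^\top\tfrac{\delta H}{\delta x}$ are scalars and cancel. This holds pointwise whatever the state dependence of $\mathbf{R_0}$, because $\mathbf{R_0}$ appears as a plain multiplication in both terms. Hence the whole zero-order contribution vanishes.

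\emph{First-order part.} Write $e_x=\tfrac{\delta H}{\delta x}$ and $e_s=\tfrac{\delta H}{\delta s}$. Since $P_1$ is symmetric and constant,
\[
e_x^\top P_1\partial_z e_x=\tfrac12\partial_z\big(e_x^\top P_1 e_x\big).
\]
For the $G_1$ coupling,
\[
e_x^\top\partial_z(G_1\mathbf{R_1}e_s)+e_s\mathbf{R_1}^\top G_1^\top\partial_z e_x=\partial_z\big(e_x^\top G_1\mathbf{R_1}e_s\big),
\]
using that $G_1$ is constant. For the entropy term,
\[
e_s g_s r_s\partial_z e_s+e_s\partial_z(g_s r_s e_s)=\partial_z\big(g_s r_s e_s^2\big).
\]
So the integrand is an exact $z$-derivative, and
\[
\dot H=\Big[\tfrac12 e_x^\top P_1e_x+e_x^\top G_1\mathbf{R_1}e_s+g_s r_s e_s^2\Big]_a^b .
\]
The key observation, and the step needing the most care, is to recognize this as $\tfrac12 \big[e^\top P_e e\big]_a^b$ with $e$ the modified effort (\ref{boundary_variables_IPHS}), ordered as $(e_x,\,e_s,\,\mathbf{R_1}e_s,\,r_se_s)$. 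Expanding with $P_e$ from (\ref{def:Pe}), the off-diagonal blocks give $e_x^\top G_1\mathbf{R_1}e_s$ twice and $g_s\,e_s\,r_se_s$ twice. The blocks of $P_e$ act independently of whether $\mathbf{R_1},r_s$ depend on the state, so the expansion is purely algebraic. I expect verifying exactly this bookkeeping (the block ordering in $e$ and the dimensions $n+m+2$) to be the main obstacle.

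\emph{Boundary port variables.} Let $M$ span the range of $P_e$, so that $P_e=M(M^\top M)^{-1}P_{ep}(M^\top M)^{-1}M^\top=M_p^\top P_{ep}M_p$. Setting $\tilde e=M_pe$, we get
\[
\dot H=\tfrac12\big(\tilde e(b)^\top P_{ep}\tilde e(b)-\tilde e(a)^\top P_{ep}\tilde e(a)\big).
\]
Introduce the standard boundary flow/effort pair
\[
f_\partial=\tfrac1{\sqrt2}\big(P_{ep}\tilde e(b)-P_{ep}\tilde e(a)\big),\qquad e_\partial=\tfrac1{\sqrt2}\big(\tilde e(b)+\tilde e(a)\big).
\]
Then $\dot H=f_\partial^\top e_\partial$, using symmetry of $P_{ep}$. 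The definitions of $W_B,W_C$ read $v=\Xi_1f_\partial+\Xi_2e_\partial$ and $y=\Xi_2f_\partial+\Xi_1e_\partial$ (up to matching the ordering of the $\Xi$'s). A direct computation gives
\[
y^\top v=f_\partial^\top(\Xi_2^\top\Xi_1+\Xi_1^\top\Xi_2)\dots=f_\partial^\top(\Xi_1^\top\Xi_1+\Xi_2^\top\Xi_2)e_\partial,
\]
where the quadratic terms cancel by $\Xi_2^\top\Xi_1+\Xi_1^\top\Xi_2=0$ and the remaining factor equals $I$. Hence $\dot H=y^\top v$, which vanishes when $v=0$.
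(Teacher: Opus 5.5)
Your proof is correct and follows the same route as the paper, whose one-line proof invokes skew-symmetry of the modulated operator and leaves the rest to the reference; your computation reduces $\dot H$ to $\tfrac12\bigl[e^\top P_e e\bigr]_a^b$, and the construction of $W_B$ and $W_C$ from $P_{ep}$, $\Xi_1$, $\Xi_2$ then turns this into $y^\top v$. What you add is the bookkeeping the paper omits: the $P_e$ expansion, $P_e=M_p^\top P_{ep}M_p$, and the $\Xi$ identities. Your displayed line for $y^\top v$ is garbled, but the computation you describe in words is right: the quadratic terms vanish because $\Xi_2^\top\Xi_1$ is skew, and the cross terms give $f_\partial^\top e_\partial$.
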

\begin{proof}
The energy conservation is a direct consequence of the skew symmetry of the differential operator (cf. \cite{RamirezCES2022}).
\end{proof}

Similarly, this formulation encodes the second principle of Thermodynamics that is expressed in:

\begin{lemma} [Second law of Thermodynamics] 
The total entropy balance is given by  
\begin{equation*}
\dot{S} =  \int_a^b \sigma_t dz -y_{S}^\top v_{s}\,,
\end{equation*}
where $y_s$ and $v_s$ are the entropy conjugated input/output and $\sigma_t$ is the total internal entropy production. This leads, when the input is set to zero, to $\dot{S} = \int_a^b \sigma_t dz \geq 0 $ in accordance with the second law of Thermodynamics.
\end{lemma}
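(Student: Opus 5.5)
The plan is to compute $\dot S$ directly from the entropy row of the PDE (\ref{IPHS-BCS}). Because $S=\int_a^b s\,dz$, we have $\frac{\delta S}{\delta s}=1$ and $\frac{\delta S}{\delta x}=0$, so
\begin{equation*}
\dot S=\int_a^b \frac{\partial s}{\partial t}\,dz .
\end{equation*}
We then substitute the last row of (\ref{IPHS-BCS}), writing $T:=\frac{\delta H}{\delta s}$:
\begin{equation*}
\frac{\partial s}{\partial t}=-\mathbf{R_0}^\top G_0^\top\frac{\delta H}{\delta x}+\mathbf{R_1}^\top G_1^\top\frac{\partial}{\partial z}\frac{\delta H}{\delta x}+g_s r_s\frac{\partial T}{\partial z}+\frac{\partial}{\partial z}\big(g_s r_s T\big).
\end{equation*}
The last term is an exact derivative, so it integrates to the boundary term $\big[g_s r_s T\big]_a^b$. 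The other three terms are the internal entropy production, and the main task is to show they are nonnegative by using the definitions of $R_{0,i}$, $R_{1,i}$ and $r_s$.

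For the reversible-bracket terms, use the pseudo-bracket (\ref{pseudo-bracket}) with $\frac{\delta S}{\delta x}=0$ and $\frac{\delta S}{\delta s}=1$. This gives $\{S|G_0(:,i)|H\}=-G_0(:,i)^\top\frac{\delta H}{\delta x}$ and $\{S|G_1(:,i)\partial_z|H\}=G_1(:,i)^\top\partial_z\frac{\delta H}{\delta x}$. The sign of the second one comes from the formal adjoint, with $(G\partial_z)^*=-G^\top\partial_z$ up to boundary terms; I will take the convention fixed in \cite{RamirezCES2022} so that the signs line up. With this,
\begin{equation*}
-\mathbf{R_0}^\top G_0^\top\tfrac{\delta H}{\delta x}=\sum_i \gamma_{0,i}\{S|G_0(:,i)|H\}^2,\qquad \mathbf{R_1}^\top G_1^\top\partial_z\tfrac{\delta H}{\delta x}=\sum_i\gamma_{1,i}\{S|G_1(:,i)\partial_z|H\}^2 .
\end{equation*}
Both are nonnegative because $\gamma_{k,i}>0$.

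The thermal term is handled the same way. Since $r_s=\gamma_s\{S|H\}=\gamma_s\,\partial_z T$, we get $g_s r_s\partial_z T=\gamma_s g_s(\partial_z T)^2$. This is nonnegative provided $g_s\ge0$ (for heat conduction, $g_s=1$); I expect to have to state this sign assumption on $g_s$ explicitly. Setting $\sigma_t$ equal to the sum of these three squared contributions gives $\sigma_t\ge0$.

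The remaining step, and the one I expect to be the main obstacle, is identifying the boundary term $\big[g_s r_s T\big]_a^b$ with $-y_S^\top v_s$. Here $r_sT$ is exactly the last component of the modified effort $e$ in (\ref{boundary_variables_IPHS}), namely $\mathbf{R}\frac{\delta H}{\delta s}$. Through the structure of $P_e$ in (\ref{def:Pe}), $g_s$ pairs the entries $T$ and $r_sT$. So we would define the entropy port $(v_s,y_s)$ as the corresponding components of $W_B e$ and $W_C e$, restricted to the $g_s$ block of $P_{ep}$. The identity $[e^\top P_e e]_a^b$-type $=$ $2y^\top v$, which follows from the conditions on $\Xi_1,\Xi_2$, applied to this block gives $\big[g_s r_s T\big]_a^b=-y_S^\top v_s$, with the sign set by the orientation convention. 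The work is mostly bookkeeping: checking that the $g_s$ block of $P_e$ decouples in $M_p$, so that the entropy port is well defined. Finally, with zero input ($v_s=0$) we obtain $\dot S=\int_a^b\sigma_t\,dz\ge0$.
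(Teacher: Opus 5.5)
Your treatment of the volume terms is correct, and it is what the paper's one-line proof (``encoded in the last line'', with a pointer to \cite{RamirezCES2022}) leaves implicit. With $\frac{\delta S}{\delta x}=0$ and $\frac{\delta S}{\delta s}=1$ you get $\{S|G_0(:,i)|H\}=-G_0(:,i)^\top\frac{\delta H}{\delta x}$. Directly from $(G_1(:,i)\partial_z)^*=-G_1(:,i)^\top\partial_z$ you also get $\{S|G_1(:,i)\partial_z|H\}=G_1(:,i)^\top\partial_z\frac{\delta H}{\delta x}$. So the signs line up without any further convention, and the three production terms are $\gamma$-weighted squares. You are also right that $g_s\ge 0$ must be assumed; the definition as written does not impose it.

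The step that fails is your identification of the boundary term $[g_s r_s T]_a^b$ with the $g_s$-block of the energy port. That block of $P_e$ pairs $T=\frac{\delta H}{\delta s}$ with $r_sT$. Restricting $\frac12[e^\top P_e e]_a^b=y^\top v$ to it therefore gives $[g_s r_s T^2]_a^b$, which is the boundary \emph{heat} flux, not the entropy flux $[g_s r_s T]_a^b$. The two differ by a factor $T$ at each endpoint. This is not bookkeeping in $M_p$: any product of sub-blocks of $y$ and $v$ is quadratic in the boundary values of $e$, whereas the entropy flux is linear in them.

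The heat example makes this concrete. There $v=[\,r_sT(b),\,-r_sT(a)\,]^\top$ and $y=[\,T(b),\,T(a)\,]^\top$, so $y^\top v=[\lambda\partial_z T]_a^b$, whereas $[g_s r_s T]_a^b=v_1+v_2$.

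The entropy port has to be built from the co-state of $S$, not of $H$. Either pair the entropy flux $g_s r_s T$ with $\frac{\delta S}{\delta s}=1$ (in the example the entropy-conjugated output is the constant vector $[1,\,1]^\top$, up to the sign convention). Or, equivalently, pair the heat flux $g_s r_s T^2$ with $1/T$; this is exactly the choice $u_{s\partial}=-\vf_Q\cdot\vn$, $y_{s\partial}=1/T$ the paper makes in its $N$D second-law lemmas. With that correction the balance identity holds. The zero-input conclusion then follows, since $v_s=0$ forces $g_s r_s T=0$ at $a$ and $b$.
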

\begin{proof} The entropy production is encoded in the last line of the IPHS representation (entropy balance) (cf. \cite{RamirezCES2022}).
\end{proof}

\subsection{The 1D heat equation as an IPHS}
In order to illustrate the aforementioned representation we consider here the 1D heat equation example. The internal energy density $u(s)$ is chosen as thermodynamic potential function and $U(s)=\int_a^b u\, \d z$. From Gibb's equation one has $T=\frac{du}{ds}(s)$ leading to the following entropy balance equation
\begin{equation*}
\frac{\partial s}{\partial t} = -\frac{1}{T} \frac{\partial}{\partial z}\left( -\lambda\frac{\partial T}{\partial z} \right)\\
\end{equation*}
where, according to Fourier's law, $\lambda$ denotes the heat conduction coefficient and $-\lambda\frac{\partial T}{\partial z}=f_Q$ corresponds to the heat flux. \begin{equation}\label{heat_eq_ds}
\frac{\partial s}{\partial t} =\frac{\partial}{\partial z}\left(\frac{\lambda}{T}\frac{\partial T}{\partial z} \right) + \frac{\lambda}{T^2}\left(\frac{\partial T}{\partial z}\right)^2 \\
\end{equation}
from where the entropy production $\sigma_s =\frac{\lambda}{T^2}\left(\frac{\partial T}{\partial z}\right)^2$ is identified. This balance equation can be rewritten as:
\begin{equation*}
\frac{\partial s}{\partial t} = \frac{\lambda}{T^2}\frac{\partial T}{\partial z}\frac{\partial}{\partial z} \left(\frac{\delta U}{\delta s}\right) + \frac{\partial}{\partial z} \left(\frac{\lambda}{T^2}\frac{\partial T}{\partial z}\left(\frac{\delta U}{\delta s}\right)\right)
\end{equation*}
which is equivalent to \eqref{IPHS-BCS} where $P_0=0$, $P_1=0$, $G_0=0$, $G_1=0$, $g_s=1$ and $r_s=\gamma_s \{S | U \}$ with $\gamma_s=\frac{\lambda}{T^2}$ and $\{S | U \}=\frac{\partial T}{\partial z}$, i.e.
\[
\frac{\partial s}{\partial t}= \left( g_s \mathbf{r_s} \frac{\partial \left(.\right)}{\partial z}+\frac{\partial \left(g_s \mathbf{r_s}.\right)}{\partial z}\right) \frac{\delta U}{\delta s}\,.
\]
In this case $P_e=\frac{1}{2}\begin{bmatrix}0&1\\ 1& 0\end{bmatrix}$, $n=1$ and $m=1$.  Choosing $\Xi_1=\frac{1}{\sqrt{2}}\begin{bmatrix}
1 &  0 \\
1& 0 
\end{bmatrix}$, $\Xi_2=\frac{1}{\sqrt{2}}\begin{bmatrix}
0 &  1 \\
0& -1 
\end{bmatrix}$ the boundary inputs and outputs of the system are
\begin{align}%
\label{eq:input-output1D}
v(t)&=
\begin{bmatrix}
\left( \frac{\lambda_s}{T} \frac{\partial T}{\partial z}\right)(t,b) \\ 
-\left(\frac{\lambda_s}{T} \frac{\partial T}{\partial z}\right)(t,a)
\end{bmatrix}, & 
y(t)&=
\begin{bmatrix}
T(t,b)\\ 
T(t,a)
\end{bmatrix},
\end{align}
respectively the entropy flux and the temperature at each boundary.

In the following section, we will see how to extend this kind of formulation to multidimensional cases for diffusion processes.

\section{The $ND$ heat conduction as an IPHS}
\label{s-Cond}


Consider the heat conduction over an $N$D spatial domain. We assume the medium to be undeformable, i.e. its deformations are neglected, and consider only one physical domain, the thermal domain and its dynamics. The conserved quantity is the density of internal energy and the state reduces to a unique variable. Choose the internal energy density $u = u(s)$ as thermodynamic potential function and $U(s) = \int_\Dom u\,\d z$ , in this case Gibbs' relation defines the temperature as intensive variable conjugated to the extensive variable, the entropy by $T=\frac{\d u}{\d s}(s)$. This leads to write the following entropy balance equation
\begin{equation}
T\,\frac{\partial s}{\partial t} = \frac{\partial u}{\partial t} = -\div(\vf_Q)
= \div(\lambda\,\grad(T))
\end{equation}
where, according to Fourier's law, $\lambda$ denotes the heat conduction coefficient and $\vf_Q := - \lambda\,\grad(T)$ corresponds to the heat flux.
Alternatively, the heat conduction can also be written in terms of the entropy flux
$\vf_s:=\frac{1}{T}\,\vf_Q$
\begin{eqnarray}
\frac{\partial s}{\partial t} &=& -\frac{1}{T}\,\div(T\,\vf_s)
= -\div(\vf_s) -\frac{1}{T}\,\grad(T)\cdot\vf_s \\
&=&  \div(\frac{\lambda}{T}\,\grad(T)) +\frac{\lambda}{T^2}\,\|\grad(T)\|^2\,, \label{eq-jaum}
\label{eq:Jaumann}
\end{eqnarray}
from which the entropy production $\sigma_s := \frac{\lambda}{T^2}\,\|\grad(T)\|^2$ is readily identified. This form of the balance equation \eqref{eq-jaum} is also known as Jaumann's entropy balance.

Recalling that $\delta_s U = \partial_s u  = T$, the IPHS formulation of the heat conduction can be obtained from \eqref{eq:Jaumann} as follows: 
$$
\frac{\partial s}{\partial t} =
\frac{\lambda}{T^2}\,\grad(T)\cdot \grad(\delta_s U)
+ \div(\frac{\lambda}{T^2}\,\grad(T)\,\delta_s U)\,,
$$
which can be seen as a particular instance of the relation 
\begin{equation}
    \label{eq-heat}
\partial_t s = [g_s \vr_s\cdot\grad(.) + \div(g_s \vr_s .)] (\delta_s U)\,,
\end{equation}
with $g_s=1$,  $\vr_s=\gamma_s\, \left\{ S\,|\,  U \right\}$ with $\gamma_s = \frac{\lambda}{T^2}$ and $\left\{ S\,|\,  U \right\} = (\delta_s S)\, \grad(\delta_s U) = 1\, \grad(T) $. \\

\begin{proposition}\label{prop-Psi}
The unbounded differential operator  
\begin{equation} \label{eq-Psi}
    \Psi : T \mapsto \Psi(T):= [g_s \vr_s\cdot\grad(T) + \div(g_s \vr_s\, T)]
\end{equation}
is formally skew symmetric in $L^2(\Dom)$.
\end{proposition}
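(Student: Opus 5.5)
The plan is to verify formal skew-symmetry directly: for smooth test functions $\phi,\psi$ compactly supported in $\Dom$ (or vanishing on $\Domb=\partial\Dom$), I will show that $\inner{\Psi(\phi),\psi}{L^2(\Dom)} = -\inner{\phi,\Psi(\psi)}{L^2(\Dom)}$. Throughout, the vector field $\vG := g_s\vr_s$ is treated as a \emph{given} coefficient, smooth enough to differentiate, exactly as the modulated operators in Definition~\ref{definition_IPHS} are understood. Here $\vr_s$ depends on the state through $T$, but formal skew-symmetry concerns the linear operator obtained once the modulation is frozen. The state dependence therefore plays no role, and stating this freezing explicitly is the only conceptual point of the argument.

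First I expand the divergence term with the product rule, $\div(\vG\,\phi) = \vG\cdot\grad(\phi) + \phi\,\div(\vG)$. This gives
\begin{equation*}
\Psi(\phi) = 2\,\vG\cdot\grad(\phi) + \div(\vG)\,\phi .
\end{equation*}
Next I integrate the first term of $\inner{\Psi(\phi),\psi}{}$ by parts with the divergence theorem:
\begin{equation*}
\int_\Dom (\vG\cdot\grad\phi)\,\psi\,\d z = \int_\Domb \phi\,\psi\,\vG\cdot\vn\,\d\Domb - \int_\Dom \phi\,\div(\vG\,\psi)\,\d z .
\end{equation*}
Applying the same identity to the term $\int_\Dom \div(\vG\phi)\,\psi\,\d z$ yields $\int_\Domb \phi\,\psi\,\vG\cdot\vn\,\d\Domb - \int_\Dom \phi\,\vG\cdot\grad\psi\,\d z$.

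Summing the two results gives
\begin{equation*}
\inner{\Psi(\phi),\psi}{} = -\inner{\phi,\Psi(\psi)}{} + 2\int_\Domb \phi\,\psi\,(g_s\vr_s\cdot\vn)\,\d\Domb .
\end{equation*}
The boundary term vanishes for compactly supported test functions, which is exactly formal skew-symmetry, i.e. $\Psi^* = -\Psi$. Equivalently, taking $\psi=\phi$ shows that $\inner{\Psi(\phi),\phi}{}$ reduces to the boundary term $2\int_\Domb \phi^2\, g_s\vr_s\cdot\vn$. This identity is also what will later produce the boundary port variables, temperature paired with entropy flux, in analogy with \eqref{eq:input-output1D}.

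No step is a genuine obstacle. The only care needed is the regularity and bookkeeping: $\vG$ must be at least $C^1$ (or $W^{1,\infty}$) so that $\div(\vG)$ makes sense, and the two boundary contributions must be tracked so the factor $2$ comes out correctly for the subsequent energy balance.
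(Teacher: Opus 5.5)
Your argument for the proposition itself is correct and is essentially the paper's proof. You integrate each of the two terms of $\langle\Psi(\phi),\psi\rangle$ by parts, collect the boundary contribution $2\int_\Domb g_s\vr_s\cdot\vn\,\phi\,\psi\,\d\gamma$ (the same general identity the paper records right after its proof), and discard it for fields vanishing on $\Domb$. Freezing the modulation $g_s\vr_s$ is also what the paper does implicitly.

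One correction to your side remark, however. Setting $\psi=\phi$ in your identity gives $2\langle\Psi(\phi),\phi\rangle = 2\int_\Domb \phi^2\,g_s\vr_s\cdot\vn\,\d\gamma$, hence
\begin{equation*}
\int_\Dom \phi\,\Psi(\phi)\,\d z = \int_\Domb \phi^2\,g_s\vr_s\cdot\vn\,\d\gamma ,
\end{equation*}
without the factor $2$ you wrote. Your own expansion confirms this pointwise: $\phi\,\Psi(\phi) = \vG\cdot\grad(\phi^2) + \div(\vG)\,\phi^2 = \div(\vG\,\phi^2)$.

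This is exactly \eqref{eq-TPsiT}, which the paper uses in the proof of Lemma~\ref{lm-1Th-1species}. With $g_s\vr_s = \frac{\lambda}{T^2}\grad(T)$, one gets $T\,\Psi(T) = \div(\lambda\,\grad(T))$, whose boundary integral is precisely the incoming heat flux $-\vf_Q\cdot\vn$. With your factor $2$, the heat flux would be counted twice and the energy balance would be wrong. So the factor-of-$2$ bookkeeping you flagged is exactly where your remark goes wrong.
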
 
\begin{proof}
$(\Psi(T), \theta)_{L^2} = \int_{\Dom} \theta\, [g_s \vr_s\cdot\grad(T) + \div(g_s \vr_s\, T)]\,\d z
= - \int_{\Dom}  T\, \div(g_s \vr_s\, \theta) + \grad(\theta) \cdot g_s \vr_s\,T \,\d z
=  - (T, \Psi(\theta))_{L^2}$ 
for fields $\theta$ and $T$ vanishing at the boundary $\Domb$. 
\end{proof}
Otherwise, 
$(\Psi(T), \theta)_{L^2} + (T, \Psi(\theta))_{L^2} = 2\,\int_\Domb g_s\,\vr_s\cdot \vn\,T\,\theta\,\d \gamma$
 in general.
One useful case to be inspected is the case with $\theta = T$, which gives 
\begin{equation}
    \label{eq-TPsiT}
\int_{\Dom} T\,\Psi(T)\,\d z = \int_\Domb T\,g_s\,\vr_s\cdot \vn\,T\,\d \gamma\,.
\end{equation}
  
Moreover, in \eqref{eq-heat}, we can identify the entropy production as
$$
\sigma_s = g_s \vr_s\cdot\grad(\delta_s U) = \gamma_s \, \|\left\{ S \,|\,  U \right\}\|^2 \geq 0\,.
$$


With  formulation \eqref{eq-heat} at hand, the first principle of Thermodynamics can be recovered in:

\begin{lemma}\label{lemma_3}(First law of Thermodynamics)
$$
\frac{\d}{\d t} U = \int_\Dom \partial_t u\,\d z =  - \int_\Domb \vf_s\cdot \vn\,(\delta_s U)\,\d \gamma\,,
$$
where the boundary term can be decomposed into  $u_\partial=-\vf_s\cdot\vn$, the incoming normal component of the entropy flux, and $y_\partial = T$, the temperature.
\end{lemma}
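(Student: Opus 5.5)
The plan is to compute $\frac{\d}{\d t}U$ directly from the IPHS form \eqref{eq-heat} and use the boundary identity \eqref{eq-TPsiT} that follows Proposition~\ref{prop-Psi}.

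\emph{Chain rule.} Since $U(s)=\int_\Dom u(s)\,\d z$ and $\delta_s U=\partial_s u = T$, we have
$$\frac{\d}{\d t}U=\int_\Dom \partial_t u\,\d z=\int_\Dom T\,\partial_t s\,\d z.$$

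\emph{Substitute the dynamics.} By \eqref{eq-heat}, $\partial_t s=\Psi(\delta_s U)=\Psi(T)$, so
$$\frac{\d}{\d t}U=\int_\Dom T\,\Psi(T)\,\d z.$$
This is exactly the left-hand side of \eqref{eq-TPsiT}. Hence it equals $\int_\Domb T\,g_s\,\vr_s\cdot\vn\,T\,\d\gamma$. In words, formal skew-symmetry kills the whole interior contribution, including the entropy production, and only the boundary term survives.

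For completeness, \eqref{eq-TPsiT} can be re-derived in one line. Since
$$T\,\div(g_s\vr_s T)=\div(g_s\vr_s T^2)-g_s\vr_s T\cdot\grad(T),$$
the integrand collapses to $\div(g_s\vr_s T^2)$. The divergence theorem then gives the boundary integral.

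\emph{Identify the boundary term with the entropy flux.} With $g_s=1$ and $\vr_s=\frac{\lambda}{T^2}\grad(T)$, we get $g_s\vr_s T=\frac{\lambda}{T}\grad(T)$. On the other hand, $\vf_s=\frac{1}{T}\vf_Q=-\frac{\lambda}{T}\grad(T)$. Therefore $g_s\,\vr_s\,T=-\vf_s$, and the boundary integral becomes $-\int_\Domb \vf_s\cdot\vn\,T\,\d\gamma$. Recalling $T=\delta_s U$, this is the stated formula. The port decomposition $u_\partial=-\vf_s\cdot\vn$, $y_\partial=T$ is then just a reading of the integrand as a product of power-conjugated variables.

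\emph{Main difficulty.} The only delicate point is not computational: it is bookkeeping of the factor of $T$. One must check that $\vr_s T$ is indeed the entropy flux and not the heat flux, so that the conjugate output is the temperature. One must also check the orientation, with $\vn$ the outward normal, to get the sign of $u_\partial$ as incoming. Regularity is assumed sufficient to apply the divergence theorem, in line with the formal level of Proposition~\ref{prop-Psi}.
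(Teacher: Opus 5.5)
Your proof is correct, but it takes a different route from the paper's. The paper does not use the IPHS operator for this lemma. It starts from the original internal-energy balance $\partial_t u = -\div(\vf_Q) = -\div(T\,\vf_s)$ and applies the divergence theorem once, which gives $-\int_\Domb T\,\vf_s\cdot\vn\,\d\gamma$ immediately.

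You instead go through three steps:
\begin{itemize}
\item the chain rule $\partial_t u = T\,\partial_t s$;
\item the IPHS form \eqref{eq-heat};
\item the boundary identity \eqref{eq-TPsiT} that comes from the formal skew-symmetry of $\Psi$.
\end{itemize}
You then need the extra identification $g_s\vr_s T = \frac{\lambda}{T}\grad(T) = -\vf_s$, which you carry out correctly.

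The paper's argument is shorter and relies only on the physical conservation law. However, it does not show that the formulation \eqref{eq-heat} itself \emph{encodes} the first law, even though the text announces the lemma as being recovered ``with formulation \eqref{eq-heat} at hand.''

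Your argument does show this. It exhibits energy conservation as a structural consequence of skew-symmetry. In $T\,\Psi(T)$, the term $T\,\sigma_s = T\,g_s\vr_s\cdot\grad(T)$ cancels against part of $T\,\div(g_s\vr_s T)$, leaving the pure divergence $\div(g_s\vr_s T^2)$. This is exactly the mechanism the paper uses later in the proof of Lemma~\ref{lm-1Th-1species}, so your route carries over directly to the diffusion cases.
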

Notice that this is nothing but the $ND$ generalization of \eqref{eq:input-output1D}.

\begin{proof}
  Since $\partial_t u = -\div(\vf_Q)= - \div(T\,\vf_s)$, making use of Stokes theorem once again, we see that $\frac{\d}{\d t} U = - \int_\Domb T\,\vf_s\cdot\vn\,\d \gamma$.
  One can indeed choose  $u_\partial=-\vf_s\cdot\vn$ and $y_\partial = T$ as collocated boundary controls for the internal energy.
\end{proof}

From the last equation of the IPHS formulation, the second principle of Thermodynamics can be recovered in:

\begin{lemma}[Second law of Thermodynamics]\label{lemma_4} 
$$
\frac{\d}{\d t} U = \int_\Dom \partial_t s\,\d z = \int_\Dom \sigma_s\,\d z + \int_\Domb g_s \vr_s\cdot \vn\,(\delta_s U)\,\d \gamma\,,
$$
where the volumic term is the entropy creation $\sigma_s \geq 0$, and the boundary term can be decomposed into  $u_{s\partial}=-\vf_Q\cdot\vn$, the incoming normal component of the heat flux, and $y_{s \partial} = \beta:=\frac{1}{T}$, the reciprocal temperature.
\end{lemma}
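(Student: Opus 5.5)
The statement has a typo on its left-hand side: it should read $\frac{\d}{\d t} S$, not $\frac{\d}{\d t} U$, since the integrand is $\partial_t s$. The plan is to integrate the entropy balance \eqref{eq-heat} over $\Dom$ and split the operator $\Psi$ of Proposition~\ref{prop-Psi} into its two terms. Concretely,
\[
\frac{\d}{\d t} S=\int_\Dom \partial_t s\,\d z=\int_\Dom g_s\vr_s\cdot\grad(\delta_s U)\,\d z+\int_\Dom \div(g_s\vr_s\,\delta_s U)\,\d z .
\]
The first (non-divergence) term is exactly the entropy production identified after \eqref{eq-TPsiT}. With $g_s=1$, $\vr_s=\gamma_s\grad(T)$ and $\delta_s U=T$, it equals
\[
\sigma_s=\gamma_s\|\grad(T)\|^2=\frac{\lambda}{T^2}\|\grad(T)\|^2 ,
\]
which is nonnegative because $\gamma_s>0$ (assuming $\lambda>0$, $T>0$). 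The second term is handled by the divergence (Stokes) theorem, exactly as in the proof of Lemma~\ref{lemma_3}, giving $\int_\Domb g_s\vr_s\cdot\vn\,(\delta_s U)\,\d\gamma$. This yields the claimed balance.

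Next we identify the boundary port variables. On $\Domb$ the integrand is
\[
g_s\vr_s\cdot\vn\,T=\frac{\lambda}{T^2}\,\grad(T)\cdot\vn\,T=\frac{\lambda}{T}\,\grad(T)\cdot\vn .
\]
Since $\vf_Q=-\lambda\grad(T)$, this can be rewritten as
\[
\frac{\lambda}{T}\,\grad(T)\cdot\vn=\left(-\vf_Q\cdot\vn\right)\frac{1}{T}=u_{s\partial}\,y_{s\partial},
\]
with $u_{s\partial}=-\vf_Q\cdot\vn$ the incoming heat flux and $y_{s\partial}=\beta=1/T$. As a consistency check, the same quantity equals $-\vf_s\cdot\vn$, the incoming entropy flux, in agreement with Jaumann's form \eqref{eq-jaum}. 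When the input $u_{s\partial}$ is set to zero, we get $\dot S=\int_\Dom\sigma_s\,\d z\ge 0$.

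The main obstacle is only the bookkeeping of the boundary term. One must make sure the factor structure $g_s\vr_s\,\delta_s U$ collapses correctly to $\vf_Q\cdot\vn/T$, which requires $\delta_s U=T$. One must also check the sign convention for the outward normal $\vn$, so that $u_{s\partial}$ is indeed the incoming flux. Finally, the application of Stokes' theorem needs enough regularity of $T$ (for instance $T\in H^1$ with $\grad(T)$ having a normal trace) and $T>0$ on $\Dom$.
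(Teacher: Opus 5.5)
Your proposal is correct and follows essentially the paper's proof. You split the integrated right-hand side of \eqref{eq-heat} into the production term $g_s\vr_s\cdot\grad(\delta_s U)=\sigma_s\geq 0$ and a divergence term, convert the latter into $\int_\Domb g_s\vr_s\cdot\vn\,T\,\d\gamma=\int_\Domb \frac{1}{T}\,\lambda\,\vn\cdot\grad(T)\,\d\gamma$ by the divergence theorem, and read off $u_{s\partial}=-\vf_Q\cdot\vn$ and $y_{s\partial}=1/T$, exactly as the paper does; your remark that the left-hand side should read $\frac{\d}{\d t}S$ rather than $\frac{\d}{\d t}U$ is also right (compare Lemmas~\ref{lm-2Th-1species} and~\ref{lemma-7}).
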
 
\begin{proof}
  The internal entropy production is clearly identified as $\int_\Dom \sigma_s\,\d z \geq 0$. The boundary term can be computed as
  \begin{eqnarray*}
      \int_\Dom \div(g_s \vr_s\,\delta_s U)\,\d z &=& \int_\Domb g_s \vr_s\cdot\vn\,(\delta_s U)\,\d \gamma \\
  &=&  \int_\Domb \frac{1}{T}\,(\lambda\,\vn\cdot\grad(T))\,\d \gamma\,.
  \end{eqnarray*}
Thus, as collocated boundary controls for the entropy, one can indeed choose  $u_{s\partial}=-\vf_Q\cdot\vn= \lambda\,\vn\cdot\grad(T)$ and $y_{s \partial} = \beta:=\frac{1}{T}$.
\end{proof}

\section{The complete $ND$ diffusion process as an IPHS}
\label{s-CDP}

We consider now $ND$ diffusion processes involving one or more chemical species in addition to thermal conduction. 

\subsection{Processes with diffusion of one species only}
In the case of diffusion of one species only in $ND$ the balance equations on the extensive variables, namely the molar concentration $c$ and the entropy density $s$ in this case, are given by
\begin{eqnarray}
\frac{\partial c}{\partial t} &=& - \div(\vf_c) \label{eq-1n}\\
\frac{\partial s}{\partial t} &=& - \div(\vf_s) + \sigma_s  + \sigma_c \label{eq-1s}
\end{eqnarray}
where $\vf_c := -d\,\grad(\mu)$ is the flux of moles, with $d>0$ the diffusion coefficient from Fick's law and $\mu$ the chemical potentiel function;  $\vf_s := -\frac{\lambda}{T} \grad(T)$ is the entropy flux. The internal entropy production $\sigma_c$ due to irreversible diffusion is  $\sigma_c := -\frac{1}{T}\vf_c \cdot \grad(\mu) = \frac{d}{T} \| \grad(\mu) \|^2 \geq 0$, and  the internal entropy production $\sigma_s$  due to heat conduction, $\sigma_s := -\frac{1}{T}\vf_s \cdot \grad(T) = \frac{\lambda}{T^2} \| \grad(T) \|^2 \geq 0$.
We define as state variables $\vx := \begin{bmatrix}
c & s
\end{bmatrix}^\top$ 
and co-state or co-energy variables w.r.t. $U(\vx)$, the internal energy: $\ve:=\delta_{\vx} U :=  \begin{bmatrix}
\frac{\partial u}{\partial c} & \frac{\partial u}{\partial s}
\end{bmatrix}^\top
= \begin{bmatrix}
\mu & T
\end{bmatrix}^\top$. 
In this case we have 
$\delta_{\vx} S :=  \begin{bmatrix}
\frac{\partial s}{\partial c} & \frac{\partial s}{\partial s}
\end{bmatrix}^\top
= \begin{bmatrix}
0 & 1
\end{bmatrix}^\top$. We also define 
\begin{itemize}
\item  $\vr_s:=\gamma_s\, \left\{ S \,|\,  U \right\}$, the modulating function of the heat conduction, with $\gamma_s = \frac{\lambda}{T^2}$ and $\left\{ S\,|\,  U \right\} = (\delta_s S)\, \grad(\delta_s U) = 1\, \grad(T)$ which is the driving force of temperature.
\item $\vr_c:=\gamma_c\, \left\{ S \,| 1 \, |\,  U \right\}$, the modulating function of the mass diffusion of the species, with $\gamma_c = \frac{d}{T}$  and $\left\{ S \,| 1 \, |\, U \right\} = (\delta_s S)\, \grad(\delta_c U) = 1\, \grad(\mu)$, which is the driving force of the diffusion.
\end{itemize}



With the newly defined quantities, system \eqref{eq-1n}-\eqref{eq-1s} can be rewritten as follows:
\begin{align}
 \partial_t c &= - \div(\vf_c)  = +\div(\vr_c\, \frac{\partial u}{\partial s})\,,\\
 \partial_t s &= - \div(\vf_s) + \sigma_s + \sigma_c\,,  \nonumber \\
        &= +\div(\vr_s\, \frac{\partial u}{\partial s})   +  \vr_s\cdot \grad(\frac{\partial u}{\partial s}) + \vr_c\cdot \grad(\frac{\partial u}{\partial c})\,, \nonumber \\
  &= \vr_c\cdot \grad(\frac{\partial u}{\partial c}) + \Psi(\frac{\partial u}{\partial s}) \,.
\end{align}
 
In more compact notations, the following structure has now been obtained for the system of equations \eqref{eq-1n}-\eqref{eq-1s}:
\begin{equation}
\label{eq-CDScompact}
    \begin{bmatrix}
\frac{\partial c}{\partial t} \\ 
\frac{\partial s}{\partial t}
\end{bmatrix}
=
\begin{bmatrix}
0 & \div(\vr_c \cdot) \\
\vr_c \cdot \grad(\cdot) & [g_s \vr_s\cdot\grad(\cdot) + \div(g_s \vr_s \cdot)]
\end{bmatrix}
\,
\begin{bmatrix}
\frac{\partial u}{\partial c} \\ 
\frac{\partial u}{\partial s}
\end{bmatrix}\,.
\end{equation}


We have already identified in Proposition~\ref{prop-Psi} that the unbounded operator  $\Psi$ in \eqref{eq-Psi} is formally skew symmetric. Moreover, it is also a classical computation to check that the adjoint of $T \mapsto \div(\vr_c T)$ is $\mu \mapsto  - \vr_c \cdot \grad(\mu)$; hence the {\text glob}al operator $\cal J$ in \eqref{eq-CDScompact} is formally skew symmetric.


With formulation \eqref{eq-CDScompact} at hand, the second principle of Thermodynamics can be recovered in: \\

\begin{lemma} [Second law of Thermodynamics] 
\label{lm-2Th-1species}
$$
\frac{\d}{\d t} {\cal S} = \int_\Dom \partial_t s\,\d z
= \int_\Dom (\sigma_s + \sigma_c)\,\d z + \int_\Domb g_s \vr_s\cdot \vn\,(\delta_s U)\,\d \gamma\,,
$$
where the volumic term is the total entropy creation $\sigma_s + \sigma_c \geq 0$, and the boundary term can be decomposed into  $u_{s\partial}=-\vf_Q\cdot\vn$, the incoming normal component of the heat flux, and $y_{s \partial} = \beta:=\frac{1}{T}$, the reciprocal temperature.
\end{lemma}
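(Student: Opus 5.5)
The plan is to integrate the second line of the compact system \eqref{eq-CDScompact} over $\Dom$ and treat its two contributions separately: the coupling term $\vr_c\cdot\grad(\delta_c U)$ and the conduction term $\Psi(\delta_s U)$.

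For the coupling term, no integration by parts is needed. Since $\delta_c U=\mu$ and $\vr_c=\frac{d}{T}\grad(\mu)$, the integrand is pointwise
\[
\vr_c\cdot\grad(\mu)=\frac{d}{T}\|\grad(\mu)\|^2=\sigma_c\geq 0,
\]
using $d>0$ and $T>0$. So this term contributes $\int_\Dom\sigma_c\,\d z$ directly.

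For the conduction term, split $\Psi$ as in \eqref{eq-Psi} into $g_s\vr_s\cdot\grad(T)+\div(g_s\vr_s T)$.
\begin{itemize}
\item The first piece is exactly $\sigma_s=\frac{\lambda}{T^2}\|\grad(T)\|^2\geq 0$, as already identified after Proposition~\ref{prop-Psi}.
\item The second piece is a divergence, so Stokes' theorem turns its integral into $\int_\Domb g_s\vr_s\cdot\vn\,T\,\d\gamma$.
\end{itemize}
This is the same step used in Lemma~\ref{lemma_4}, since the entropy row here differs from the pure conduction case only by the added $\sigma_c$ term. Adding the two contributions gives the stated balance, with volumic term $\sigma_s+\sigma_c\geq 0$.

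It remains to interpret the boundary term as a collocated pair. With $g_s=1$ and $\vr_s=\frac{\lambda}{T^2}\grad(T)$, the integrand becomes
\[
\frac{\lambda}{T}\,\grad(T)\cdot\vn=\frac{1}{T}\bigl(\lambda\,\vn\cdot\grad(T)\bigr)=\beta\,(-\vf_Q\cdot\vn).
\]
This gives $u_{s\partial}=-\vf_Q\cdot\vn$ and $y_{s\partial}=\beta=1/T$, exactly as in Lemma~\ref{lemma_4}.

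The main point needing care is the mass-flux boundary contribution. The boundary entropy flux should include a term $-\frac{\mu}{T}\vf_c\cdot\vn$ that does not appear in the formula. In the IPHS structure, the entropy row contains only $\vr_c\cdot\grad(\mu)$, not a divergence of a $\mu$-flux, so no diffusive boundary term arises from the $s$-equation itself. The required check is therefore that integrating row two of \eqref{eq-CDScompact} produces no boundary term beyond the one coming from $\div(g_s\vr_s T)$. That is immediate, because $\vr_c\cdot\grad(\mu)$ is integrated pointwise and never integrated by parts. Hence the stated identity holds as written, the mass contribution entering only through $\sigma_c$, under the standing assumptions $T>0$, $d,\lambda>0$ and enough regularity for Stokes' theorem on $\Domb$.
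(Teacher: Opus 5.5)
Your proof is correct and follows the paper's own argument. You integrate the entropy row of \eqref{eq-CDScompact}, identify $\vr_c\cdot\grad(\mu)=\sigma_c$ pointwise, and split $\Psi(T)$ into $\sigma_s$ plus $\div(g_s\vr_s T)$; Stokes' theorem then sends the divergence to the boundary term, which you read off as in Lemma~\ref{lemma_4}.

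Your closing worry about a missing $-\frac{\mu}{T}\vf_c\cdot\vn$ entropy flux is unnecessary. In the paper's convention, $\vf_Q=-\lambda\,\grad(T)$ is the heat flux net of the chemical energy transport, so the total energy flux is $\vf_Q+\mu\vf_c$ (cf.\ Lemma~\ref{lm-1Th-1species}). The entropy flux is therefore exactly $\vf_s=\vf_Q/T$, as posited in \eqref{eq-1s}, and the IPHS structure discards nothing.
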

\begin{proof} From the last line, 
    $\int_\Dom \partial_t s\,\d z =  \int_\Domb g_s \vr_s \cdot \vn\, T\, \d \gamma\, + \, 
    \int_\Dom  [\vr_c \cdot \grad(\frac{\partial u}{\partial c}) + g_s \vr_s\cdot\grad(\frac{\partial u}{\partial s})]\, \d z$\,,
    and upon substitution, $\vr_c \cdot \grad(\mu) = \frac{d}{T} \|\grad(\mu)\|^2:=\sigma_c$ first, and $g_s \vr_s\cdot\grad(T)= \frac{\lambda}{T^2} \|\grad(T\|^2 := \sigma_s$ second.
\end{proof}

With  formulation  \eqref{eq-CDScompact} at hand, the first principle of Thermodynamics can be recovered in: \\

\begin{lemma} [First law of Thermodynamics] 
\label{lm-1Th-1species}
$$
\frac{\d}{\d t} U = \int_\Dom \partial_t u\,\d z
=  \int_\Domb T\, \frac{\lambda}{T} \vn\cdot\grad(T) +  \mu\, d\vn\cdot\grad(\mu)\,\d \gamma\,,
$$
where the boundary term can be decomposed into two parts,  $u_\partial^s=-\vf_s\cdot\vn$   and $u_\partial^n=-\vf_c\cdot\vn$, and $y_\partial^s = T$ and $y_\partial^n = \mu$, for instance.
\end{lemma}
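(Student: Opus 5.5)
The plan is to compute $\frac{\d}{\d t}U$ by the chain rule, using Gibbs' relation $\d u = \mu\,\d c + T\,\d s$:
$$\int_\Dom \partial_t u\,\d z = \int_\Dom \Big(\mu\,\partial_t c + T\,\partial_t s\Big)\d z = \int_\Dom \ve^\top \partial_t\vx\,\d z.$$
I would then substitute the compact structure \eqref{eq-CDScompact}. This turns the right-hand side into the quadratic form $\int_\Dom \ve^\top {\cal J}\ve\,\d z$ with $\ve = [\mu\;\, T]^\top$. The off-diagonal blocks $\div(\vr_c\,\cdot)$ and $\vr_c\cdot\grad(\cdot)$ are formally adjoint up to sign, and $\Psi$ is formally skew symmetric by Proposition~\ref{prop-Psi}. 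So all volumic terms must cancel, and only boundary terms survive. Physically, this is the statement that the entropy productions $\sigma_s+\sigma_c$ appearing in $\partial_t s$ are exactly compensated by dissipation in the other balances once multiplied by $T$.

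To make this explicit, I would treat the three contributions separately. First, the diffusion coupling:
$$\int_\Dom \mu\,\div(\vr_c T) + T\,\vr_c\cdot\grad(\mu)\,\d z = \int_\Domb \mu\,T\,\vr_c\cdot\vn\,\d\gamma,$$
by the divergence theorem applied to $\div(\mu\,T\,\vr_c)$. Second, the conduction part, where \eqref{eq-TPsiT} gives
$$\int_\Dom T\,\Psi(T)\,\d z = \int_\Domb T^2\,g_s\,\vr_s\cdot\vn\,\d\gamma.$$
Third, I substitute the modulating functions $\vr_c = \frac{d}{T}\grad(\mu)$ and $\vr_s = \frac{\lambda}{T^2}\grad(T)$ with $g_s = 1$. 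Then $\mu T\,\vr_c\cdot\vn = \mu\, d\,\vn\cdot\grad(\mu)$ and $T^2\vr_s\cdot\vn = \lambda\,\vn\cdot\grad(T) = T\,\frac{\lambda}{T}\vn\cdot\grad(T)$. Together these give exactly the boundary integral in the statement.

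For the port decomposition, I note that $-\vf_s\cdot\vn = \frac{\lambda}{T}\vn\cdot\grad(T)$ and $-\vf_c\cdot\vn = d\,\vn\cdot\grad(\mu)$. The integrand therefore reads $y_\partial^s u_\partial^s + y_\partial^n u_\partial^n$ with $y_\partial^s = T$ and $y_\partial^n = \mu$. As a cross-check, this agrees with the direct computation $\partial_t u = -\div(T\vf_s + \mu\vf_c)$, obtained as in the proof of Lemma~\ref{lemma_3}.

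There is one delicate point to verify. The first line of \eqref{eq-CDScompact} reads $\div(\vr_c\,\partial_s u)$, i.e.\ $\div(\vr_c T)$, which equals $-\div(\vf_c)$ only because $\vr_c = \frac{d}{T}\grad(\mu)$. I must keep the factor $T$ inside the divergence rather than treat $\vr_c$ as independent. Otherwise the coupling term will not combine with $T\,\vr_c\cdot\grad(\mu)$ into a total divergence. This bookkeeping is the main place where an error could arise; the rest is routine Stokes computation.
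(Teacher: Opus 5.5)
Your proof is correct and follows the paper's argument essentially step for step. Both use the chain rule, substitute the compact form \eqref{eq-CDScompact}, reduce everything to boundary terms via the identity $\mu\,\div(\vr_c T)+T\,\vr_c\cdot\grad(\mu)=\div(\mu T\vr_c)$ together with \eqref{eq-TPsiT}, and finally insert $\vr_c=\frac{d}{T}\grad(\mu)$ and $\vr_s=\frac{\lambda}{T^2}\grad(T)$.

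One small correction to your ``delicate point'': that identity is just the product rule and holds for any vector field $\vr_c$. The specific form $\vr_c=\frac{d}{T}\grad(\mu)$ is needed only to identify the boundary integrand, and for \eqref{eq-CDScompact} to reproduce \eqref{eq-1n}; it is not needed for the volumic terms to cancel.
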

\begin{proof} It is based on the chain rule $\partial_t u = \frac{\partial u}{\partial c}\,\partial_t c +\frac{\partial u}{\partial s} \,\partial_t s $. 
     \begin{eqnarray*}
     \int_\Dom \partial_t u\,\d z &=& \int_\Dom  \begin{bmatrix}
\frac{\partial u}{\partial c}  \frac{\partial u}{\partial s}
\end{bmatrix}\,
{\cal J}\, \begin{bmatrix}
\frac{\partial u}{\partial c} \\ 
\frac{\partial u}{\partial s}
\end{bmatrix} \,\d z \\
&=& \int_\Dom \div(\mu \vr_c T) +  T\,\Psi(T) \,\d z\,,\\
&=& \int_\Domb [\mu\,\vn\cdot\vr_c\,T +  T\, g_s\vn\cdot\vr_s\,T]\,\d \gamma\,, \text{thanks to } \eqref{eq-TPsiT}\,,\\
&=& \int_\Domb [\mu\, (d\vn\cdot\grad(\mu)) + T \,(\frac{\lambda}{T}\,\vn\cdot\grad(T))]\,\d \gamma\,.
\end{eqnarray*}
Indeed, since ${\cal J}$ in \eqref{eq-CDScompact} is formally skew symmetric, only boundary terms appear in the result. 
\end{proof}


\subsection{Processes with diffusion of $n$ different species}
In a similar way as in the diffusion of one species, we write the balance equations on the extensive variables of Thermodynamics, i.e. the $n$ molar concentrations of the species involved in the diffusion process plus the entropy density, resulting in the $n+1$ following equations:
\begin{align}
\frac{\partial c_1}{\partial t} &= - \div(\vf_{c_1})\,, \label{eq-n1}\\
  & \vdots   \nonumber \\ 
\frac{\partial c_n}{\partial t} &= - \div(\vf_{c_n})\,, \label{eq-nM} \\
\frac{\partial s}{\partial t} &= - \div(\vf_s) + \sigma_s  + \sum_{i=1}^n \sigma_{c_i}\,, \label{eq-ns}
\end{align}
where thanks to Fick's law, $\vf_{c_i}:= -d_i\,\grad(\mu_i)$,  and 
$\sigma_{c_i} :=   -\frac{1}{T}\vf_{c_i} \cdot \grad(\mu_i) = \frac{d_i}{T} \| \grad(\mu_i) \|^2 \geq 0$ for the entropy production.

Let us now define $\vr_{c_i}:=-\frac{1}{T}\vf_{c_i}= +\frac{d_i}{T}\,\grad(\mu_i)$, the modulating function of the mass diffusion of the $i$-th species, then we can compute $- \div(\vf_{c_i}) = + \div(\vr_{c_i}\,\frac{\partial u}{\partial s})$, and $\sigma_{c_i}= \vr_{c_i} \cdot \grad(\frac{\partial u}{\partial {c_i}})$. Hence, in more compact notations, the following structure has been obtained for \eqref{eq-n1}-\eqref{eq-ns}:
\begin{equation}
\label{eq-nCDScompact0}
    \begin{bmatrix}
\frac{\partial c_1}{\partial t} \\ 
\vdots \\ 
\frac{\partial c_n}{\partial t}\\ 
\frac{\partial s}{\partial t}
\end{bmatrix}
= 
{\cal J}_{\text{glob}} \,
\begin{bmatrix}
\frac{\partial u}{\partial c_1} \\ 
\vdots \\  
\frac{\partial u}{\partial c_n} \\
\frac{\partial u}{\partial s}
\end{bmatrix}\,,
\end{equation}
where
\begin{equation}
\label{eq-nCDScompact}
{\cal J}_{\text{glob}} 
:=
\begin{bmatrix}
0 & \cdots & 0 & \div(\vr_{c_1} \cdot) \\
\vdots & & \vdots & \vdots \\
0 & \cdots & 0 & \div(\vr_{c_n} \cdot) \\
[\vr_{c_1}\cdot \grad(\cdot)]  & \cdots & [\vr_{c_n}\cdot \grad(\cdot)]  & \Psi(\cdot)
\end{bmatrix}.
\end{equation}

We have already identified in Proposition~\ref{prop-Psi} that the unbounded operator  $\Psi$ in \eqref{eq-Psi} is formally skew symmetric. Moreover, it is also a classical computation to check that the adjoint of $T \mapsto \div(\vr_{c_i} T)$ is $\mu \mapsto  - \vr_{c_i} \cdot \grad(\mu)$; hence the global operator ${\cal J}_{\text {glob}}$ in \eqref{eq-nCDScompact} is formally skew symmetric.


With formulation \eqref{eq-nCDScompact} at hand, the second principle of Thermodynamics can be recovered in: 

\begin{lemma}[Second law of Thermodynamics]\label{lemma-7} 
$$
\frac{\d}{\d t} S = \int_\Dom \partial_t s\,\d z
= \int_\Dom (\sigma_s + \sum_{i=1}^n \sigma_{c_i})\,\d z + \int_\Domb g_s \vr_s\cdot \vn\,(\delta_s U )\,\d \gamma\,,
$$
where the volumic term is the total entropy creation $\sigma_s + \sum_{i=1}^n \sigma_{c_i} \geq 0$, and the boundary term can be decomposed into  $u_{s\partial}=-\vf_Q\cdot\vn$, the incoming normal component of the heat flux, and $y_{s \partial} = \beta:=\frac{1}{T}$, the reciprocal temperature.
\end{lemma}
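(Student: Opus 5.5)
The plan is to integrate the last line of \eqref{eq-nCDScompact0} over $\Dom$ and treat its two kinds of terms separately. That line reads
$$
\partial_t s = \sum_{i=1}^n \vr_{c_i}\cdot\grad\Big(\frac{\partial u}{\partial c_i}\Big) + \Psi\Big(\frac{\partial u}{\partial s}\Big).
$$
Integrating gives $\frac{\d}{\d t}S=\int_\Dom \partial_t s\,\d z$, since the domain is fixed and $s$ is assumed smooth enough to differentiate under the integral sign. The argument then mirrors the proof of Lemma~\ref{lm-2Th-1species}, now summed over the $n$ species.

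First I treat the $\Psi$ term. By \eqref{eq-Psi}, $\Psi(T)=g_s\vr_s\cdot\grad(T)+\div(g_s\vr_s T)$, with $T=\delta_s U$.
\begin{itemize}
\item By the divergence theorem, the divergence part integrates to the boundary term $\int_\Domb g_s\vr_s\cdot\vn\,(\delta_s U)\,\d\gamma$.
\item The remaining part is $g_s\vr_s\cdot\grad(T)$. With $g_s=1$ and $\vr_s=\frac{\lambda}{T^2}\grad(T)$, this equals $\frac{\lambda}{T^2}\|\grad(T)\|^2=\sigma_s$.
\end{itemize}
Next, each cross term satisfies $\vr_{c_i}\cdot\grad(\mu_i)=\frac{d_i}{T}\|\grad(\mu_i)\|^2=\sigma_{c_i}$, directly from the definition $\vr_{c_i}=\frac{d_i}{T}\grad(\mu_i)$. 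Summing these contributions gives the stated identity. Nonnegativity of the volumic term follows from $\lambda,d_i>0$ and $T>0$.

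Finally I identify the boundary pairing. Substituting $\vr_s$ and $\delta_s U=T$ gives
$$
g_s\vr_s\cdot\vn\,T=\frac{1}{T}\,\lambda\,\vn\cdot\grad(T).
$$
With $\vf_Q=-\lambda\grad(T)$, this is exactly $y_{s\partial}\,u_{s\partial}$, where $u_{s\partial}=-\vf_Q\cdot\vn$ and $y_{s\partial}=\beta=1/T$, as in the proof of Lemma~\ref{lemma_4}.

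The computation is routine. The one point needing care is the boundary bookkeeping. The species blocks $\div(\vr_{c_i}\cdot)$ enter only the $c_i$ equations and contribute nothing to the entropy boundary flux. In the $s$ equation the species appear only through the gradient terms $\vr_{c_i}\cdot\grad(\mu_i)$, with no divergence part, so no $\mu_i$-dependent boundary term arises. This should be checked explicitly so that the only boundary contribution is the heat-flux term.
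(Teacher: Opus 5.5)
Your proposal is correct and takes the same approach as the paper. The paper's proof (via Lemma~\ref{lm-2Th-1species}, summed over $i$) also integrates the entropy line of \eqref{eq-nCDScompact0}, converts $\div(g_s\vr_s T)$ into the boundary term, and identifies $\vr_{c_i}\cdot\grad(\mu_i)=\sigma_{c_i}$ and $g_s\vr_s\cdot\grad(T)=\sigma_s$ by substitution, with the boundary pairing $u_{s\partial}\,y_{s\partial}$ read off exactly as in Lemma~\ref{lemma_4}.
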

\begin{proof}
The proof is very similar to that of Lemma~\ref{lm-2Th-1species}, summing on $i$ between $1$ and $n$.
\end{proof}

With formulation \eqref{eq-nCDScompact}  at hand, the first principle of Thermodynamics can be recovered in:

\begin{lemma}[First law of Thermodynamics]\label{lemma-8} 
\begin{eqnarray*}
\frac{\d}{\d t} U &=& \int_\Dom \partial_t u\,\d z\,, \\
&=&  \int_\Domb T\, (\frac{\lambda}{T} \vn\cdot\grad(T)) +  \sum_{i=1}^n \mu_i\, (d_i\vn\cdot\grad(\mu_i))\,\d \gamma\,,
\end{eqnarray*}
where the boundary term can be decomposed into $1+n$ terms,  $u_\partial^s=-\vf_s\cdot\vn$ and $u_\partial^{c_i}=-\vf_{c_i}\cdot\vn$  for $1\leq i \leq n$, and $y_\partial^s = T$ and $y_\partial^{c_i} = \mu_i$ for $1\leq i \leq n$, for instance.
\end{lemma}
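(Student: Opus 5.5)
The plan is to mirror the proof of Lemma~\ref{lm-1Th-1species}, replacing the $2\times 2$ operator by ${\cal J}_{\text{glob}}$ in \eqref{eq-nCDScompact}. First I would apply the chain rule to the internal energy density $u(c_1,\dots,c_n,s)$:
\[
\partial_t u = \sum_{i=1}^n \frac{\partial u}{\partial c_i}\,\partial_t c_i + \frac{\partial u}{\partial s}\,\partial_t s ,
\]
with $\frac{\partial u}{\partial c_i}=\mu_i$ and $\frac{\partial u}{\partial s}=T$. By \eqref{eq-nCDScompact0} this equals the quadratic form $\ve^\top {\cal J}_{\text{glob}}\,\ve$, where $\ve=[\mu_1,\dots,\mu_n,T]^\top$.

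Next I would expand the quadratic form row by row. The first $n$ rows contribute $\mu_i\,\div(\vr_{c_i} T)$. The last row contributes $T\,\vr_{c_i}\cdot\grad(\mu_i)$ for each $i$, together with $T\,\Psi(T)$. I would pair the terms for each species using the product rule,
\[
\mu_i\,\div(\vr_{c_i}T) + T\,\vr_{c_i}\cdot\grad(\mu_i) = \div(\mu_i\,\vr_{c_i}\,T),
\]
which is the pointwise form of the adjointness of $T\mapsto \div(\vr_{c_i}T)$ and $\mu\mapsto -\vr_{c_i}\cdot\grad(\mu)$. Integrating over $\Dom$ and applying the divergence theorem turns each of these into $\int_\Domb \mu_i\,\vn\cdot\vr_{c_i}\,T\,\d\gamma$. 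The remaining term $\int_\Dom T\,\Psi(T)\,\d z$ is handled directly by \eqref{eq-TPsiT}, which gives $\int_\Domb T\,g_s\,\vr_s\cdot\vn\,T\,\d\gamma$.

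Finally I would substitute the modulating functions. Using $\vr_{c_i}=\frac{d_i}{T}\grad(\mu_i)$ gives $\mu_i\,\vr_{c_i}\cdot\vn\,T=\mu_i\,d_i\,\vn\cdot\grad(\mu_i)$. Using $g_s=1$ and $\vr_s=\frac{\lambda}{T^2}\grad(T)$ gives $T^2\,\vr_s\cdot\vn = T\,\frac{\lambda}{T}\,\vn\cdot\grad(T)$. Together these yield the stated boundary expression.

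For the port decomposition, I would note that $-\vf_s\cdot\vn=\frac{\lambda}{T}\vn\cdot\grad(T)$ and $-\vf_{c_i}\cdot\vn=d_i\vn\cdot\grad(\mu_i)$. The integrand is therefore $y_\partial^s u_\partial^s+\sum_i y_\partial^{c_i}u_\partial^{c_i}$ with $y_\partial^s=T$ and $y_\partial^{c_i}=\mu_i$.

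There is no real obstacle here. The only point that needs care is bookkeeping: in the product-rule step, the off-diagonal contributions of ${\cal J}_{\text{glob}}$ must be paired species by species so that no volume terms survive, and $\Psi$ must be treated only through \eqref{eq-TPsiT} rather than through its formal skew symmetry, since the boundary terms do not vanish.
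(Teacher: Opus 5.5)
Your proposal is correct and follows the paper's route. The paper's proof simply repeats the argument of Lemma~\ref{lm-1Th-1species} summed over $i$: the chain rule, rewriting $\int_\Dom \ve^\top{\cal J}_{\text{glob}}\ve\,\d z$ as $\sum_i\int_\Dom\div(\mu_i\vr_{c_i}T)\,\d z+\int_\Dom T\,\Psi(T)\,\d z$, then the divergence theorem with \eqref{eq-TPsiT}, and finally substituting $\vr_{c_i}$ and $\vr_s$, which are exactly your steps.
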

\begin{proof}
The proof is very similar to that of Lemma~\ref{lm-1Th-1species}, summing on $i$ between $1$ and $n$.
\end{proof}

Let us enlighten the structure which has been obtained above through a factorization result that happens to be very meaningful from the thermodynamical point of view.

\begin{proposition}\label{prop-n-diffusion}
The operator ${\cal J}_{\text{glob}}$ in \eqref{eq-nCDScompact} can be factorized in the following way. Let 
$$
R_1 : \vartheta \mapsto 
\begin{bmatrix}
\vr_{c_1}\\ 
\vdots \\ 
\vr_{c_n}
\end{bmatrix} \, \vartheta\,,
$$
and let 
\begin{equation} \label{eq-G1}
{\cal G}_1 : 
\begin{bmatrix}
\vr_{1} \\ 
\vdots \\ 
\vr_{n}
\end{bmatrix} \mapsto
\begin{bmatrix}
\div(\vr_{1})\\ 
\vdots \\ 
\div(\vr_{n})
\end{bmatrix}\,,
\end{equation}
then 
$$ 
{\cal J}_{\text{glob}} = 
\begin{bmatrix}
0 && {\cal G}_1\,R_1 \\
- R_1^*\,{\cal G}_1^* && \Psi(\cdot)
\end{bmatrix}\,.
$$
\end{proposition}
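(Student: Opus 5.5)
The factorization is a direct entrywise comparison with the block structure of ${\cal J}_{\text{glob}}$ in \eqref{eq-nCDScompact}. I would first fix the function spaces so the adjoints are meaningful. $R_1$ is a multiplication operator from $\LR$ (scalar fields $\vartheta$) to $L^2(\Dom;\R^{nN})$, the space of $n$-tuples of $N$-dimensional vector fields. ${\cal G}_1$ acts componentwise from $L^2(\Dom;\R^{nN})$ to $L^2(\Dom;\R^n)$, applying $\div$ to each vector field. Both inner products are the natural $L^2$ ones, summed over species and over spatial components.

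The upper-right block is then immediate. For a scalar $\vartheta$ (which will be $\frac{\partial u}{\partial s}=T$),
\[
{\cal G}_1 R_1\vartheta = \begin{bmatrix}\div(\vr_{c_1}\vartheta) & \cdots & \div(\vr_{c_n}\vartheta)\end{bmatrix}^\top ,
\]
which is exactly the last column of ${\cal J}_{\text{glob}}$ above the $\Psi$ entry. The zero $n\times n$ upper-left block and the $\Psi$ entry are copied unchanged from \eqref{eq-nCDScompact}.

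The only real step is the lower-left block, which requires computing the formal adjoints. For ${\cal G}_1$, take test fields $\bm{w}=(\bm{w}_1,\dots,\bm{w}_n)$ and $\bm{p}=(p_1,\dots,p_n)$ vanishing on $\Domb$. Applying the divergence theorem componentwise, as in the proof of Proposition~\ref{prop-Psi}, gives $\sum_i\int_\Dom p_i\div(\bm{w}_i)\,\d z=-\sum_i\int_\Dom \grad(p_i)\cdot\bm{w}_i\,\d z$. Hence
\[
{\cal G}_1^*\bm{p}=-\begin{bmatrix}\grad(p_1) & \cdots & \grad(p_n)\end{bmatrix}^\top .
\]
Since $R_1$ is pointwise multiplication, its adjoint is the pointwise transpose, $R_1^*(\bm{w}_1,\dots,\bm{w}_n)=\sum_i \vr_{c_i}\cdot\bm{w}_i$. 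Composing the two,
\[
-R_1^*{\cal G}_1^*\bm{p}=\sum_{i=1}^n \vr_{c_i}\cdot\grad(p_i),
\]
which is precisely the row $[\vr_{c_1}\cdot\grad(\cdot)\ \cdots\ \vr_{c_n}\cdot\grad(\cdot)]$ acting on $(\frac{\partial u}{\partial c_1},\dots,\frac{\partial u}{\partial c_n})$.

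The main care point is bookkeeping rather than difficulty. One must check that $R_1^*$ is taken as a pointwise linear map with the field-dependent coefficients $\vr_{c_i}$ (which depend on the state but are frozen when the operator is linearized). One must also check that the sign from integration by parts combines with the explicit minus sign to give $+\vr_{c_i}\cdot\grad$. Finally, I would remark that this factorization makes the formal skew symmetry of ${\cal J}_{\text{glob}}$ structurally evident: the off-diagonal blocks are $\mathcal{A}$ and $-\mathcal{A}^*$ with $\mathcal{A}={\cal G}_1R_1$, and the diagonal block $\Psi$ is skew symmetric by Proposition~\ref{prop-Psi}. This matches the $G_1\mathbf{R_1}$ structure of Definition~\ref{definition_IPHS}.
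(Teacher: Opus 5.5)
Your proposal is correct and is essentially the paper's proof. The paper also obtains the factorization by identifying the formal adjoints ${\cal G}_1^*\colon (p_1,\dots,p_n)\mapsto(-\grad(p_1),\dots,-\grad(p_n))$ and $R_1^*\colon(\vg_1,\dots,\vg_n)\mapsto\sum_{i=1}^n\vr_{c_i}\cdot\vg_i$; you additionally spell out the upper-right block check ${\cal G}_1R_1\vartheta=(\div(\vr_{c_i}\vartheta))_i$, the integration by parts, and the sign bookkeeping that the paper leaves implicit.
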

\begin{proof}
The identification of the formal adjoints gives the structure. Indeed, 
$$ 
{\cal G}_1^* :
\begin{bmatrix}
d_{1}\\ 
\vdots \\ 
d_{n}
\end{bmatrix} \mapsto
\begin{bmatrix}
 - \grad(d_1) \\ 
 \vdots \\ 
 - \grad(d_n) 
\end{bmatrix}\,,
 $$
 and 
 $$R_1^* : 
 \begin{bmatrix}
\vg_1 \\ 
\vdots \\ 
\vg_n 
\end{bmatrix}
\mapsto 
\vr_{c_1} \cdot \vg_1 + \cdots + \vr_{c_n} \cdot \vg_n
\,.$$
\end{proof}

\subsection{The general $ND$ formulation}

At this stage we are able to propose the following structured PDE for IPHS in $ND$ (compare with \eqref{IPHS-BCS})

\begin{proposition}
\begin{multline}\label{IPHS-BCSnD}
\frac{\partial }{\partial t} 
\begin{bmatrix} 
x(t,z)\\ 
s(t,z)
\end{bmatrix}=
\begin{bmatrix}
P_0 & G_0\mathbf{R_0}\\ 
-\mathbf{R_0}^\top G_0^\top &0
\end{bmatrix}
\begin{bmatrix}
\frac{\delta H}{\delta x}(t,z)\\ 
\frac{\delta H}{\delta s}(t,z) 
\end{bmatrix} + \\ 
\begin{bmatrix} {\cal J}_1  && {\cal G}_1 \mathbf{R_1}  \\ 
- \mathbf{R_1}^* {\cal G}_1^*   && g_s \mathbf{r_s}\cdot \grad\left(\cdot\right) + \div \left(g_s \mathbf{r_s} \cdot\right)\end{bmatrix} 
\begin{bmatrix}\frac{\delta H}{\delta x}(t,z)\\ \frac{\delta H}{\delta s}(t,z) \end{bmatrix}
\end{multline}
where $P_0=-P_0^\top$, ${\cal J}_1 = - {\cal J}_1^*$ and ${\cal G}_1 $ is a first-order differential operator. 
\end{proposition}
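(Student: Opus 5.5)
The statement is a structural claim, so I read it as two assertions. First, the PDE \eqref{IPHS-BCSnD} contains the $N$D models derived above as special instances. Second, its generating operator is formally skew symmetric, so that the first and second laws follow exactly as in Lemmas~\ref{lemma-7} and~\ref{lemma-8}. I would prove the skew symmetry block by block and then check the instances.

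\textbf{Skew symmetry.} Write the full operator as the sum of a zero-order part $\mathcal J_0$ and a differential part $\mathcal J_\partial$.

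For $\mathcal J_0$, skew symmetry is pointwise and algebraic. Indeed, $P_0=-P_0^\top$, and the off-diagonal blocks $G_0\mathbf{R_0}$ and $-\mathbf{R_0}^\top G_0^\top$ are minus transposes of each other, with the bottom-right entry equal to zero.

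For $\mathcal J_\partial$, I treat the three kinds of blocks in turn:
\begin{itemize}
\item The block $\mathcal J_1$ is skew by hypothesis.
\item The bottom-right block is the operator $\Psi$ of Proposition~\ref{prop-Psi}, which is formally skew symmetric. Its boundary defect is $2\int_\Domb g_s\vr_s\cdot\vn\,T\,\theta\,\d\gamma$.
\item For the off-diagonal pair I use the adjoint of a composition. Since $(\mathcal G_1\mathbf{R_1})^*=\mathbf{R_1}^*\mathcal G_1^*$, the lower-left block $-\mathbf{R_1}^*\mathcal G_1^*$ is exactly minus the adjoint of the upper-right block.
\end{itemize}
Concretely, for test fields $(e_x,e_s)$ and $(\tilde e_x,\tilde e_s)$, I would expand $\langle \mathcal J e,\tilde e\rangle+\langle e,\mathcal J\tilde e\rangle$. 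The interior terms cancel pairwise, and only boundary integrals remain. These come from Green's formula applied to $\mathcal J_1$, to $\mathcal G_1$ (a first-order operator) and to $\Psi$.

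\textbf{Specialisation.} Next I check that the formula reduces correctly.
\begin{itemize}
\item Taking $P_0=0$, $G_0=0$, $\mathcal J_1=0$, $\mathcal G_1$ as in \eqref{eq-G1} and $\mathbf{R_1}=R_1$ recovers ${\cal J}_{\text{glob}}$. This is precisely Proposition~\ref{prop-n-diffusion}.
\item With $n=1$ it gives \eqref{eq-CDScompact}.
\item With no species it gives \eqref{eq-heat}.
\item In dimension one, $\grad$ and $\div$ become $\partial_z$, $\mathcal G_1=\partial_z(G_1\,\cdot)$ and $\mathcal J_1=P_1\partial_z$. In that case $-\mathbf{R_1}^*\mathcal G_1^*=\mathbf{R_1}^\top G_1^\top\partial_z$, which recovers \eqref{IPHS-BCS}.
\end{itemize}

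\textbf{Balance laws.} Pairing the equation with $\delta H$ gives $\dot H$. By skew symmetry this equals boundary terms only, which proves the first law.

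Pairing with $\delta S=(0,1)$ gives the second law. Suppose, as in the paper, that $\mathbf{R_0},\mathbf{R_1},\mathbf{r_s}$ are defined via $\gamma$'s times the pseudo-brackets $\{S|\cdot|H\}$ with $\delta_x S=0$. Then the $s$-row yields
\[
\mathbf{R_0}^\top\!\left(-G_0^\top\tfrac{\delta H}{\delta x}\right)+\mathbf{R_1}\cdot\!\left(-\mathcal G_1^*\tfrac{\delta H}{\delta x}\right)+g_s\mathbf{r_s}\cdot\grad\tfrac{\delta H}{\delta s},
\]
plus a divergence term. Each of these is of the form $\gamma\|\text{driving force}\|^2\ge0$.

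\textbf{Main obstacle.} The delicate step is the sign and adjoint bookkeeping in the $\mathbf{R_1}$ terms. Here $\mathbf{R_1}$ is a vector-field-valued multiplier and $\mathcal G_1^*$ produces gradients. I need that $\mathbf{R_1}$ is built from $-\mathcal G_1^*\frac{\delta H}{\delta x}$ times $\delta_sS$, generalising $R_{1,i}$, so that the entropy production has the right sign. I would settle this by redoing the computation of Lemma~\ref{lm-2Th-1species} in the abstract setting.
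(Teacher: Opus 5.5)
Your proposal is correct and follows essentially the same route as the paper. The paper's proof simply invokes Proposition~\ref{prop-Psi} (skewness of the $\Psi$ block), Proposition~\ref{prop-n-diffusion} (the factorization into $\mathcal{G}_1\mathbf{R_1}$ and $-\mathbf{R_1}^*\mathcal{G}_1^*$) and Lemmas~\ref{lemma_3}, \ref{lemma_4}, \ref{lemma-7}, \ref{lemma-8}. Your block-by-block skewness check, specialisations and balance-law computation spell this out in the abstract setting.

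The sign issue you flag follows directly from the pseudo-bracket: since $\frac{\delta S}{\delta x}=0$ and $\frac{\delta S}{\delta s}=1$, one has $\{S|\mathcal{G}_1|H\}=-\mathcal{G}_1^*\frac{\delta H}{\delta x}$, so $\mathbf{R_1}\cdot(-\mathcal{G}_1^*\tfrac{\delta H}{\delta x})=\sum_i \gamma_{1,i}\,\|(\mathcal{G}_1^*\tfrac{\delta H}{\delta x})_i\|^2\ge 0$. The $\mathbf{r_s}$ term additionally needs $g_s\ge 0$, which holds for the paper's choice $g_s=1$.
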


\begin{proof}
The proof follows from Propositions \ref{prop-Psi} and \ref{prop-n-diffusion} and Lemmas \ref{lemma_3},\ref{lemma_4}, \ref{lemma-7} and \ref{lemma-8}.     
\end{proof}

Notice that all the examples presented in the present  paper do fit in this general framework (e.g. ${\cal G}_1 :=G_1 \frac{\partial}{\partial z}$ in \eqref{IPHS-BCS}, or \eqref{eq-G1}), as well as the case of the {\em non-isentropic fluid}, treated in depth in \citep{MoraWC2023}.
Moreover, following \citep[\S.5]{RamirezCES2022}, the example of  reaction-diffusion systems would only add $P_0$ and $G_0$ terms in the dynamics.

\section{Conclusion and Future work} \label{s-CF}

The 1D IPHS formulations of \citep{RamirezCES2022} has been extended to boundary-controlled $ND$ distributed parameter systems describing conduction–diffusion fluid phenomena. Within a unified and thermodynamically consistent framework, it has been showed that conduction, diffusion, and reactive transport can be integrated into a single coherent structure that preserves both global energy balance and the proper representation of entropy production. This extension provides a basis for systematic modeling and control of complex multi-physical processes governed by coupled transport mechanisms. \\
As future work, the inclusion of anisotropy should be done both for conduction and diffusions: in this case, the strictly positive scalar parameters $\lambda$ in Fourier's law and $d_i$ in Fick's law will become positive-definite symmetric tensors. Then, the example of a reaction-diffusion process in $ND$ should be carried out, based on the 1-D example. The interest of such  IPHS structured formulations for numerical simulations should be seen as a longer term perspective, see e.g. \citep{CardosoJCF2025}: indeed, structure-preserving numerical methods,  such as the Partitioned Finite Element Method (PFEM) introduced in \citep{CardosoIMAJMCI2020},  should be able to guarantee both the Thermodynamics principles at the discrete level.

{\normalsize
\bibliography{sample}
}
\end{document}